\newcommand{\bea}{\begin{eqnarray}}
\newcommand{\eea}{\end{eqnarray}}
\newtheorem{theorem}{Theorem}
\newtheorem{lemma}[theorem]{Lemma}
\newtheorem{obs}[theorem]{Observation}
\newtheorem{cor}[theorem]{Corollary}
\newcommand{\ora}{\overrightarrow}
\newcommand{\ola}{\overleftarrow}
\newcommand{\tw}{w}
\begin{document}

\title{Improving Robustness of Next-Hop Routing\thanks{This material is based upon work supported by the National Science
  Foundation, under Grant No.\ CCF-0964037.  Sean Kennedy is partially supported by a postdoctoral fellowship from the Natrual Sciences and Engineering Research Council of Canada.}}

\author{
Glencora Borradaile, W.~Sean~Kennedy, Gordon Wilfong, Lisa Zhang
}


\maketitle

\pagenumbering{arabic}

\begin{abstract}
A weakness of next-hop routing is that following a link or router failure there may be no routes between some source-destination pairs, or packets may get stuck in a routing loop as the protocol operates to establish new routes.
In this article, we address these weaknesses by describing mechanisms to choose alternate next hops.

Our first contribution is to model the scenario as the following {\sc tree
augmentation} problem.  Consider a mixed graph where some edges are
directed and some undirected.  The directed edges form a spanning tree
pointing towards the common destination node. Each directed edge
represents the unique next hop in the routing protocol.  Our goal is
to direct the undirected edges so that the resulting graph remains
acyclic and the number of nodes with outdegree two or more is
maximized.  These nodes represent those with alternative next hops
in their routing paths.

We show that {\sc tree augmentation} is NP-hard in general and present
a simple $\frac{1}{2}$-approximation algorithm.  We also study 3 special
cases.  We give exact polynomial-time algorithms for when the input spanning tree consists of exactly 2 directed
paths or when the input graph has bounded treewidth.  For planar graphs, we present a
polynomial-time approximation scheme when the input tree is a
breadth-first search tree. To the best of our knowledge, {\sc tree
  augmentation} has not been previously studied.

\end{abstract}

\section{Introduction}

In an Internet Protocol (IP) network the header of each packet
contains the intended destination for that packet.  Each router in the
network has a {\em forwarding table} in which, for each destination, there is a corresponding entry stating which port through which to send out packets.  This type of routing is called {\em next-hop routing}.
Since the routings for different destinations are independent, without
loss of generality, we will herein talk about routing to one particular
destination $d$.  A number of different routing protocols are
available for populating routing tables with  destination-port
pairs.  These protocols are called {\em interior gateway protocols}
(IGPs); see References~\cite{rfc:1058,rfc:2178,rfc:1142} for detailed operations of various of these IGPs.

For our purposes, an IP network is modeled as a graph, each node
representing a router and each edge representing a link between
routers.  The entry in the routing table for destination $d$ at router
$v$ is a pair $(d,vw)$ where $vw$ is a directed edge to a neighboring
router $w$ of $v$.  The goal of an IGP is to design routing tables so
that the union of these edges is a tree directed towards $d$; we call such a table {\em valid}.  The unique directed path from a router $v$ to destination $d$ in this tree implies a {\em route} for a message from $v$ to $d$.

A weakness of next-hop routing is that following a link or router
failure there may be no route from some source router to $d$.  While a
routing protocol operates to establish new routes (by way of new
router table entries), packets may get stuck in a routing loop.  The {\em speed of
  convergence}, i.e.~the time required to recover from such a failure in the
network topology, is of fundamental importance in modern IGPs.
Indeed, real-time applications such as voice-over IP require quick
failure recovery.  

There have been many different methods proposed to
recover from failures; see References~\cite{GSBCSHT12} and~\cite{RI07} for surveys.  
We address a requirement for one such method, {\em Permutation Routing}~\cite{VLK12}.  In permutation routing, there may be multiple entries for destination $d$ at router $v$.  The route used by a message will use the first entry that corresponds to a link or neighbor that is not currently in failure.  To avoid routing loops, the union of the edges given by these entries should be a directed acyclic graph with $d$ as the only sink.   So long as all the neighbors or edges listed in a routers table do not {\em all} fail, the tables will imply at least one route to $d$.  In order to implement permutation routing, we start with a network of communication connections between routers and single-entry routing tables corresponding to a tree, often a shortest-path or breadth-first-search (BFS) tree of the graph.  We use the edges of the network to add entries to the router tables, thus adding the resiliency required for permutation routing.  As we show in this paper, it is the choice of how to add these entries that proves to be challenging.

In this paper, we study a graph theoretic problem which exactly models
this issue. Our contribution
is two-fold. From the networking perspective, we are the first to
offer a rigorous treatment of this approach for improving next-hop
routing robustness. From the graph theoretic perspective, we introduce a
crisply-stated graph problem that appears to be novel.

\subsection{The {\sc tree augmentation} problem}

We formally define the problem in graph terms.  The input is a mixed
graph $G=(V, E\cup \overrightarrow T)$, where $V$ is the set of nodes,
$E$ is a set of undirected edges and $\overrightarrow T$ is a set of
directed edges that form a directed spanning tree pointing towards the
destination node $d$.  (Such a rooted, directed spanning tree is in
fact an arborescence; we use the term tree for convenience.)  The
edges of $\overrightarrow T$ give the unique next-hop for the initial
routing paths.  Our goal is to find an orientation of the undirected
edges $E$ so that the resulting directed graph is acyclic and {\em
  maximizes} the number of nodes with outdegree at least two. Those
nodes that, in the resulting orientation, have outdegree at least two
represent those routers with an alternate next hop when a failure occurs. We
refer to this problem as {\sc tree augmentation}.

Herein, an {\em edge} $uv$ refers to an undirected edge between nodes
$u$ and $v$ and $\overrightarrow{uv}$ and $\overleftarrow{uv}$ refer
to directed edges, or arcs, from $u$ to $v$ and $v$ to $u$, respectively. 
For $E' \subseteq E$ we use $\overrightarrow{E'}$ to denote the set
of arcs corresponding to some orientation of $E'$; the details of that
orientation will be given by context.  We say a node $u \in V$ is {\em
covered} by an orientation if there is an arc $\overrightarrow{uv} \in
\overrightarrow{E'}$.  Our objective is to maximize {\em the number of
covered nodes} over all possible orientations of subsets $E'$ of $E$
such that $(V, \overrightarrow{E'} \cup \overrightarrow T)$ is an {\em
acyclic} graph.  Note that the problem statement does not require
every edge in $E$ be directed in the solution. However, we will see
that it never hurts to direct every edge in $E$.

We also consider a weighted version of {\sc tree augmentation}, for
which some nodes may be viewed as more important to be covered. Here,
a positive weight $w(v)$ is given for each node $v$ and the objective
becomes maximizing the total weight of covered nodes.

\subsection{Organization}

In Section~\ref{sec:comp} we present two observations and give simple a
$\frac{1}{2}$-approximation for {\sc tree augmentation}.  In
Section~\ref{sec:hard} we prove the NP-hardness of {\sc tree augmentation} via a reduction from
the {\sc set cover} problem. The remainder of the paper gives positive
results for some interesting special cases.  In Section
\ref{sec:boundedtw}, we describe a polynomial-time algorithm for
graphs of bounded treewidth.  In Section \ref{sec:planar}, we describe
a polynomial-time approximation scheme for planar graphs when $T$ is a
BFS tree. In Section \ref{sec:twoarm}, we describe a polynomial-time
algorithm for the special case in which the spanning tree consists of
exactly two directed paths.  We point out
that all our algorithmic results generalize to the weighted case.

\section{Observations and a simple approximation}\label{sec:comp}

We first note that, algorithmically, the acyclicity constraint is what
makes this problem challenging.  Consider the connected components of
the graph $G = (V,E)$.  In each component we want to orient the edges
so that each node is covered.  If a connected component is a tree,
root this tree at an arbitrary node and orient every edge towards the
chosen root.  In this way, every node except the root is covered.
Further, it is not possible for every node in a tree to be covered.
If a connected component is not a tree, we begin with an arbitrary
spanning tree $S$.  There must exist one node incident to an edge not
in $S$.  We root the spanning tree $S$ at this node and orient every
edge in $S$ towards the root.  Since the root is incident to an edge
outside $S$, we orient this edge away from the root.  All other edges
outside $S$ can be oriented arbitrarily.  In this way, every node of
the connected component is covered.   Since all these operations are
no more difficult that depth-first search, we get:

\begin{obs}\label{obs:cyclic}
  Finding an orientation of $E$ to maximize the number of covered
  vertices can be done in linear time.
\end{obs}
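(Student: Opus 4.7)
The plan is to decompose the undirected graph $(V,E)$ into its connected components and analyze each independently, since the orientation chosen in one component does not affect any other. Within each component, I would establish a simple upper bound on the number of coverable vertices and then exhibit an explicit orientation meeting the bound that can be produced during a single graph traversal.

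For a connected component $C$ with vertex set $V(C)$ and edge set $E(C)$, the number of covered vertices equals the number of vertices of positive outdegree, which is at most $\min(|V(C)|, |E(C)|)$. If $C$ is a tree, then $|E(C)| = |V(C)| - 1$, and this bound is attained by rooting $C$ at an arbitrary vertex $r$ and orienting every edge toward $r$, which covers every non-root vertex. If instead $C$ contains a cycle, then $|E(C)| \geq |V(C)|$ and I claim \emph{every} vertex of $C$ can be covered: pick any spanning tree $S$ of $C$ and any edge $e = xy \in E(C)\setminus S$, root $S$ at $x$, orient all edges of $S$ toward $x$, orient $e$ from $x$ to $y$, and orient the remaining edges of $E(C)\setminus(S\cup\{e\})$ arbitrarily. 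Every non-root vertex is covered by its outgoing edge in $S$, and the root $x$ is covered by the off-tree edge $e$.

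For the running time, a single DFS or BFS of $(V,E)$ produces the connected components along with a spanning tree of each, and detects during the traversal whether a component is a tree or, if not, identifies a concrete off-tree edge. Writing down the orientations described above then takes time proportional to the number of edges, for a total of $O(|V|+|E|)$. The only step that really needs verification is the cyclic case: one must check that the off-tree edge can be used to cover the chosen root without leaving any other vertex uncovered, and this is immediate from the construction, since the orientation of $S$ already covers every non-root vertex regardless of how the remaining edges are oriented. Matching the upper and lower bounds in each component yields the global optimum, establishing the observation.
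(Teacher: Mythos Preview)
Your proof is correct and essentially identical to the paper's argument: decompose into connected components, root a spanning tree and orient toward the root, and in the non-tree case use an off-tree edge to cover the root. The only difference is cosmetic---you make the upper bound $\min(|V(C)|,|E(C)|)$ explicit, whereas the paper leaves the tree-case optimality as a one-line remark.
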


However, our guiding application requires that we find an acyclic
orientation.  Suppose we have oriented $E' \subseteq E$ such that
$(V,\overrightarrow T\cup \overrightarrow{E'})$ is acyclic.  Consider
a topological ordering of the vertices of this acyclic graph and consider any
edge $uv \in E \setminus E'$ in which, without loss of generality,
$u$ is before $v$ in the topological ordering.  Then
$(V,\overrightarrow T\cup \overrightarrow{E'} \cup
\overrightarrow{uv})$ is also acyclic.  Augmenting $(V,\overrightarrow T\cup \overrightarrow{E'})$ in this way ensures that $u$ is
covered while not affecting the coverage of any other vertex.  Repeating this for every
edge of $E \setminus E'$ gives:

\begin{obs} \label{obs:complete} Given an orientation of a subset of
  $E'$ of the non-tree edges $E$, one can always orient the remaining
  edges $E \setminus E'$ while maintaining acyclicity and without decreasing 
  the objective.
\end{obs}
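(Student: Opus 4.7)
The plan is to carry out exactly the one-at-a-time augmentation described in the paragraph preceding the observation, but to phrase it cleanly with respect to a single topological ordering so that no re-ordering is needed as the proof proceeds. First I would fix a topological ordering $\pi$ of the vertices $V$ with respect to the acyclic digraph $(V,\overrightarrow T \cup \overrightarrow{E'})$; such an ordering exists by hypothesis.

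Next, for each undirected edge $uv \in E \setminus E'$, I would define its orientation to go from the $\pi$-earlier endpoint to the $\pi$-later endpoint. Let $\overrightarrow{F}$ denote the resulting set of arcs. The first step is then to verify acyclicity of $(V,\overrightarrow T \cup \overrightarrow{E'} \cup \overrightarrow{F})$: every arc in the combined digraph goes from a $\pi$-earlier vertex to a $\pi$-later one, so any directed cycle would contradict the linearity of $\pi$. In particular, $\pi$ remains a valid topological order for the augmented digraph, which also shows that orienting all remaining edges simultaneously is equivalent to orienting them one-by-one in any order, refreshing the topological order after each insertion (as suggested in the preceding discussion).

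The second step is to compare objective values. A vertex $v$ counted by the original orientation has some outgoing arc in $\overrightarrow T \cup \overrightarrow{E'}$, and this arc is still present after the augmentation, so $v$ remains covered. Conversely, the augmentation can only add outgoing arcs at vertices, never remove them, so the set of covered vertices only grows. Hence the objective cannot decrease.

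There is essentially no obstacle here; the only subtle point is making clear that orienting edges all at once along a fixed topological order is safe, so I would emphasize that $\pi$ is chosen once before any edges in $E \setminus E'$ are oriented and remains a valid topological ordering throughout. This avoids the appearance of a circular dependency between the orientations assigned to different edges of $E \setminus E'$.
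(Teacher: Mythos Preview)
Your proposal is correct and is essentially the same argument as the paper's: fix a topological order of $(V,\overrightarrow T\cup\overrightarrow{E'})$ and orient every remaining edge forward along that order. The only cosmetic difference is that you orient all of $E\setminus E'$ simultaneously against a single fixed $\pi$, whereas the paper phrases it as adding one edge at a time; your remark that $\pi$ remains a valid topological order throughout makes clear these are equivalent.
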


Consider the bipartition of $E$ into the set of {\em back edges} $B$
(edges $uv \in E$ such that $u$ is a descendent of $v$ in
$\overrightarrow T$) and the {\em cross edges} $C$ (edges $uv$ in $E$ such that $u$ and $v$  have no ancestor/descendent relationship).  Each edge of $B$ can
only be oriented in one way, from descendent to ancestor, without
introducing a cycle.  We let $\overrightarrow B$ denote this
orientation.  
Consider two orientations of $C$: let $\overrightarrow C$ be the
orientation such that each edge $uv \in C$ is oriented from low to
high pre-order (the order given by the first time a vertex is visited
by a depth-first search) and let $\overleftarrow C$ be the reverse of
this orientation (it is, in fact, the order such that each edge $uv
\in C$ is oriented from low to high {\em post}-order).

\begin{lemma}\label{lem:acyc}
  $(V,\overrightarrow T \cup \overrightarrow B \cup \overrightarrow C)$
and $(V,\overrightarrow T \cup \overrightarrow B \cup \overleftarrow
C)$ are acyclic.
\end{lemma}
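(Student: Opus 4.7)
The plan is to exhibit a linear order on $V$ that is a topological order for each of the two oriented graphs, from which acyclicity follows immediately.

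First, I would formalize the pre-order and post-order arising from the tree. Since $\overrightarrow T$ points \emph{toward} $d$, I think of DFS as starting at $d$ and traversing tree edges backwards, assigning each $v \in V$ a pre-order label $\pi(v)$ (time of first visit) and a post-order label $\sigma(v)$ (time the subtree rooted at $v$ is finished). The textbook facts I want to rely on are: (i) for any parent–child pair $p,c$ in $T$, $\pi(p) < \pi(c)$ and $\sigma(p) > \sigma(c)$; and (ii) for a cross edge $uv \in C$ with, say, $\pi(u) < \pi(v)$, the entire subtree rooted at $u$ is finished before $v$ is first visited, so $\sigma(u) < \pi(v) < \sigma(v)$. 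In particular, on $C$ the pre-order comparison and the post-order comparison of the endpoints agree.

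Next I would check each edge class against these orders. For $\overrightarrow T$ and $\overrightarrow B$, the arc goes from a descendant to an ancestor, hence from higher $\pi$ to lower $\pi$, equivalently from lower $\sigma$ to higher $\sigma$. For $\overrightarrow C$, every arc goes from lower $\pi$ to higher $\pi$, which by the cross-edge observation also means from lower $\sigma$ to higher $\sigma$. Consequently, in $(V, \overrightarrow T \cup \overrightarrow B \cup \overrightarrow C)$ every arc increases $\sigma$, so $\sigma$ is a topological order and the graph is acyclic. Dually, in $(V, \overrightarrow T \cup \overrightarrow B \cup \overleftarrow C)$ the arcs of $\overrightarrow T$ and $\overrightarrow B$ still decrease $\pi$, and the arcs of $\overleftarrow C$ now go from higher $\pi$ to lower $\pi$, so $-\pi$ is a topological order and the graph is acyclic.

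The only nonroutine step is the cross-edge observation in (ii), which is what forces the choice of pre-order for $\overrightarrow C$ and post-order as the certifying topological order (or, symmetrically, post-order for $\overleftarrow C$ and pre-order as the certifier). Everything else is a direct consequence of the definitions of back edge, cross edge, and DFS numbering, so I would keep the write-up short: define $\pi,\sigma$, state (i) and (ii), and verify the sign of $\sigma$ (resp.\ $\pi$) along each of the three arc types.
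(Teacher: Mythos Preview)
Your argument is correct and takes a genuinely different route from the paper. The paper argues by contradiction: assuming a directed cycle $\overrightarrow D$ in $(V,\overrightarrow T\cup\overrightarrow B\cup\overrightarrow C)$, it contracts the non-$C$ arcs of $\overrightarrow D$ to obtain a cycle $\overrightarrow D'$ made up solely of $\overrightarrow C$-arcs, and then notes that traversing $\overrightarrow D'$ would force the pre-order to strictly increase all the way around, which is impossible; the $\overleftarrow C$ case is symmetric. You instead exhibit an explicit topological order for each graph: post-order $\sigma$ certifies $(V,\overrightarrow T\cup\overrightarrow B\cup\overrightarrow C)$, and reverse pre-order $-\pi$ certifies $(V,\overrightarrow T\cup\overrightarrow B\cup\overleftarrow C)$. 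The hinge of your proof is observation~(ii) --- that for a cross edge the two DFS intervals are disjoint, so the pre-order and post-order comparisons of its endpoints agree --- which is precisely what lets a \emph{single} linear order handle all three arc types at once. Your version is slightly more informative (you get the topological order as a by-product) and makes explicit the monotonicity of $\pi$ and $\sigma$ along $\overrightarrow T\cup\overrightarrow B$ that the paper's contraction step uses only implicitly; the paper's version, on the other hand, avoids having to state~(ii) separately by absorbing the tree/back segments into contracted nodes.
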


\begin{proof}
  Trivially, $(V,\overrightarrow T \cup \overrightarrow B)$ is acyclic.

  For a contradiction, suppose there is a directed cycle $\overrightarrow D$ in $(V,\overrightarrow T \cup \overrightarrow B \cup \overrightarrow C)$.  Let $\overrightarrow D'$ be the cycle obtained from $\overrightarrow D$ by contracting the edges in $\overrightarrow D$ that are not in $\overrightarrow C$ (i.e. $\overrightarrow D' \subseteq \overrightarrow C$.  Let $u_1, u_2, \ldots, u_\ell$ be the vertices of $\overrightarrow D'$ in order, starting from an arbitrary vertex $u_1$ of $\overrightarrow D'$; we have that $\ell \ge 2$ for otherwise the edges of $\overrightarrow D$ could not have a consistent direction.

  By the definition of the orientation of $\overrightarrow{u_iu_{i+1}}$, $u_i$ must appear before $u_{i+1}$ in the pre-order used for each $i = 1,\ldots, \ell-1$.  Likewise, by the definition of the orientation of $\overrightarrow{u_\ell u_1}$, $u_\ell$ must appear before $u_{1}$ in the pre-order used, hence a contradiction.

  Likewise, reversing the direction of the edges in $\overrightarrow C$ results in an acyclic graph $(V,\overrightarrow T \cup \overrightarrow B \cup \overleftarrow
C)$. \qed
\end{proof}

We can view these orientations in the
following way: embed $\overrightarrow T$ in a non-crossing way with
the root at the top and all edges directed upward; consider a DFS
traversal that explores the branches from left to right; $\overrightarrow
C$ is the orientation in which all the edges of $C$ are oriented from
left to right in this embedding and $\overleftarrow C$ is the
orientation in which all the edges of $C$ are oriented from right to
left in this embedding. These orientations are illustrated in Figure~\ref{fig:tight}(c) and~(d), respectively.  We use this observation to design a simple $\frac{1}{2}$-approximation to the {\sc tree augmentation} problem; that is, the algorithm is guaranteed to return an orientation that covers at least a half of the number of vertices that can be covered in an optimal orientation.

 \begin{figure}[ht]
   \centerline{
	\input{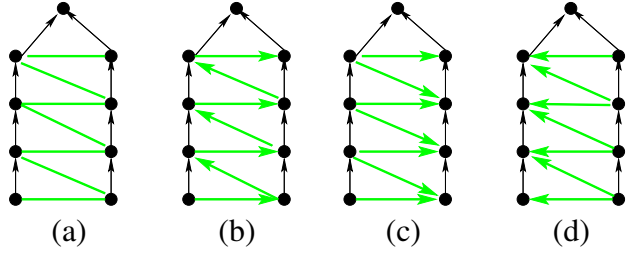}
	}
    \caption{The optimal solution (b) to the input problem (a) covers
      $n-2$ vertices whereas the
      left-to-right (c) and right-to-left (d) orientations cover
      $\frac{n-1}{2}$ vertices each.}
   \label{fig:tight}
 \end{figure}

\begin{theorem}\label{thm:2opt}
  The better of the two orientations $\overrightarrow B \cup
  \overrightarrow C$ and $\overrightarrow B \cup \overleftarrow C$ gives
  a $\frac{1}{2}$-approximation to the {\sc tree augmentation} problem.
\end{theorem}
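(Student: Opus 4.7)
The plan is to show that the union of vertices covered by the two orientations $\overrightarrow B \cup \overrightarrow C$ and $\overrightarrow B \cup \overleftarrow C$ already contains every vertex covered by an optimal solution. Once that is done, the pigeonhole principle forces at least one of the two to cover half of $\mathrm{OPT}$, and Lemma~\ref{lem:acyc} guarantees that both candidate orientations are feasible.

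First, by Observation~\ref{obs:complete} we may assume without loss of generality that an optimal orientation $\overrightarrow{E^*}$ orients every edge of $E$, so a vertex $v$ is covered in $\mathrm{OPT}$ exactly when some non-tree edge incident to $v$ is oriented away from $v$ in $\overrightarrow{E^*}$. Fix a covered vertex $v$ and a witness arc $\overrightarrow{vu} \in \overrightarrow{E^*}$; I would split into two cases based on whether $vu \in B$ or $vu \in C$.

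In the back-edge case, I first argue that $v$ must be the descendant and $u$ the ancestor: otherwise, orienting $vu$ from ancestor $v$ to descendant $u$ would close a directed cycle with the tree path $\overrightarrow T$ from $u$ up to $v$, contradicting acyclicity of $\overrightarrow{E^*}$. Hence $\overrightarrow{vu} \in \overrightarrow B$, so $v$ is covered in \emph{both} of our orientations. In the cross-edge case, the edge $vu$ is oriented from $v$ to $u$ in $\mathrm{OPT}$; depending on whether $v$ precedes or follows $u$ in the chosen pre-order, that direction agrees either with $\overrightarrow C$ or with $\overleftarrow C$, and $v$ is covered by the matching one of our two orientations.

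Putting the cases together, every vertex covered by $\mathrm{OPT}$ lies in the covered set of $\overrightarrow B \cup \overrightarrow C$ or of $\overrightarrow B \cup \overleftarrow C$ (or both), so the sum of the two coverage counts is at least $\mathrm{OPT}$ and the larger one is at least $\mathrm{OPT}/2$. The only real subtlety is verifying that back edges are forced into the unique descendant-to-ancestor orientation in \emph{any} acyclic solution; the rest is a direct case analysis together with the acyclicity already established in Lemma~\ref{lem:acyc}.
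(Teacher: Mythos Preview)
Your proof is correct and follows essentially the same approach as the paper: both arguments observe that every coverable vertex is covered by at least one of the two candidate orientations and then apply an averaging/pigeonhole step. The only cosmetic difference is that the paper compares directly against the upper bound $|V\setminus V_X|$ (vertices incident to some edge of $E$) rather than routing through an explicit optimal solution, so your detour through a witness arc in $\overrightarrow{E^*}$ is unnecessary but harmless.
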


\begin{proof}
  Let $V_B$ be the subset of vertices that are covered by
  $\overrightarrow B$.  Let $V_X$ be the subset of vertices that
  cannot be covered by any orientation (namely, those vertices that
  are not endpoints of edges in $E$).  The remaining vertices $V_C = V
  \setminus (V_B \cup V_X)$ are the endpoints of the edges in $C$.
  Each vertex in $V_C$ is covered either by $\overrightarrow C$ or $\overleftarrow C$ (or both). Let $x(\overrightarrow C) $ and $x (\overleftarrow C)$, be the number of vertices
  covered by $\overrightarrow C$ and  $\overleftarrow C$, respectively.  We have $x(\overrightarrow C) + x (\overleftarrow C) \ge |V_C|$. It follows that the
  number of vertices covered by the better of these two orientations
  has value:
  \begin{eqnarray*}
  \max \left\{|V_B|+x(\overrightarrow C),|V_B|+  x(\overleftarrow C)  \right\} &\geq&
  \frac{1}{2}\left(2|V_B|+x(\overrightarrow C) + x (\overleftarrow C) \right)
  \\&\ge& |V_B|+\frac{1}{2} |V_C| \ge\frac{1}{2} |V \setminus V_X|
  \end{eqnarray*}
  Since the maximum number of vertices that any orientation can cover
  is $|V \setminus V_X|$, the better of the two orientations
  $\overrightarrow B \cup \overrightarrow C$ and $\overrightarrow B
  \cup \overleftarrow C$ is a $\frac{1}{2}$-approximation to the {\sc tree
    augmentation} problem.  The example in Figure~\ref{fig:tight}~(b)
  illustrates that this analysis is asymptotically tight.
\qed
\end{proof} 

The $\frac{1}{2}$-approximation algorithm generalizes to the weighted case, 
for which each node $v$ carries a weight $w(v)$.
In the above proof we can simply replace the size of a node set
with the total weight from the set.
\begin{cor}
{\sc Weighted Tree Augmentation} admits a $\frac{1}{2}$-approximation.
\end{cor}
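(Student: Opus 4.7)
The plan is to lift the proof of Theorem~\ref{thm:2opt} to the weighted setting by substituting cardinalities with total weights throughout. First, I would observe that Lemma~\ref{lem:acyc} is entirely structural and makes no reference to weights, so both $(V,\overrightarrow T \cup \overrightarrow B \cup \overrightarrow C)$ and $(V,\overrightarrow T \cup \overrightarrow B \cup \overleftarrow C)$ remain feasible acyclic orientations in the weighted problem and thus are the two candidate solutions to compare.

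Next, I would keep the same vertex partition into $V_B$ (vertices covered by $\overrightarrow B$), $V_X$ (vertices incident to no edge of $E$, which no orientation can cover), and $V_C = V\setminus(V_B\cup V_X)$. Writing $w(S)=\sum_{v\in S}w(v)$ and redefining $x(\overrightarrow C)$ and $x(\overleftarrow C)$ as the total weights of vertices of $V_C$ covered by each respective orientation, the crucial inequality $x(\overrightarrow C)+x(\overleftarrow C)\ge w(V_C)$ still holds because every vertex $v\in V_C$ is covered by at least one of the two orientations and its contribution $w(v)\ge 0$ is therefore counted at least once on the left-hand side.

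The remaining arithmetic is then identical to the unweighted case: the better of the two orientations has covered weight at least
\begin{equation*}
\max\bigl\{w(V_B)+x(\overrightarrow C),\ w(V_B)+x(\overleftarrow C)\bigr\}\ \ge\ w(V_B)+\tfrac{1}{2}w(V_C)\ \ge\ \tfrac{1}{2}\,w(V\setminus V_X),
\end{equation*}
and $w(V\setminus V_X)$ upper-bounds the weight covered by any feasible orientation since vertices in $V_X$ cannot be covered at all. This yields the claimed $\frac{1}{2}$-approximation.

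There is no genuine obstacle: the acyclicity argument in Lemma~\ref{lem:acyc} is weight-oblivious, and the only averaging step in the unweighted proof relies solely on additivity, which transfers verbatim to non-negative weighted sums. The entire extension amounts to a notational substitution of $|\cdot|$ by $w(\cdot)$, so I would simply remark on this rather than reproduce the computation in full.
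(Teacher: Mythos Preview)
Your proposal is correct and matches the paper's own argument exactly: the paper simply remarks that in the proof of Theorem~\ref{thm:2opt} one replaces the size of each node set by its total weight. Your write-up spells out precisely this substitution.
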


\section{{\sc Tree Augmentation} is NP-hard}
\label{sec:hard}

In this section, we prove the following.
\begin{theorem}
\label{thm:hard}
{\sc Tree Augmentation} is NP hard.
\end{theorem}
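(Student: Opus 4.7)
My plan is to reduce from the decision version of \textsc{Set Cover}: given a universe $U = \{u_1, \ldots, u_n\}$, a family $\mathcal{S} = \{S_1, \ldots, S_m\}$, and an integer $k$, decide whether some $T \subseteq \mathcal{S}$ of size at most $k$ covers $U$. I would construct a \textsc{Tree Augmentation} instance whose maximum coverage witnesses whether such a cover exists.

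The core construction places a root $d$, an element spine $e_n \to e_{n-1} \to \cdots \to e_1 \to d$ of directed tree arcs (one per element of $U$), and for each set $S_i$ a set node $s_i$ with tree arc $s_i \to d$. Incidence $u_j \in S_i$ becomes an undirected cross edge between $e_j$ and $s_i$. An orientation $e_j \to s_i$ covers $e_j$, while $s_i \to e_j$ covers $s_i$. The acyclicity constraint couples the orientation choices at each $s_i$: orienting some edge as $e_j \to s_i$ forbids any edge $s_i \to e_{j'}$ with $j' > j$, because the spine path $e_{j'} \to e_{j'-1} \to \cdots \to e_j$ would then close the directed cycle $e_j \to s_i \to e_{j'} \to \cdots \to e_j$. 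Consequently, by Observation~\ref{obs:complete}, the cross edges at each $s_i$ must split at a threshold $t_i$: outgoing at indices $j \leq t_i$ and incoming at $j > t_i$.

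Setting $K := n + m - k$, the forward direction is straightforward: from a cover $T$ of size $k$, choose $t_i$ below $\min S_i$ for $S_i \in T$ (so every cross edge at $s_i$ is incoming, covering all elements of $S_i$) and $t_i$ above $\max S_i$ for $S_i \notin T$ (so every cross edge is outgoing, covering $s_i$); this covers all $n$ elements and $m - k$ set nodes, for total $K$. The backward direction would take an orientation achieving coverage $\geq K$, read off $T$ as the set of uncovered $s_i$'s, use a counting argument to deduce $|T| \leq k$, and then argue via the threshold structure that $T$ must in fact be a cover of $U$.

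The main obstacle is that the backward direction has a genuine gap: an intermediate threshold $t_i$ strictly between $\min S_i$ and $\max S_i$ can simultaneously cover $s_i$ and let $S_i$ ``use'' all but the bottom few of its elements, so the raw chain construction allows coverages exceeding $n + m - k^*$ whenever a witnessing set has at least two elements. To close this gap I would augment each element node $e_j$ with a padding sub-tree whose internal nodes are wired with forced back-edges so that they can only be covered if $e_j$ itself is covered through a cross edge, scaling the padding by a factor polynomial in $m$; additionally, I would pre-process the set cover instance (e.g.\ by introducing unique low-index dummy elements or by duplicating incidences) so that every relevant element is the minimum of every set containing it. With this amplification, the coverage lost on the padded element side from any intermediate-threshold orientation strictly exceeds the single set-node coverage gained, forcing every near-optimal orientation into the clean set-cover form. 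Verifying that the blowup remains polynomial completes the NP-hardness reduction.
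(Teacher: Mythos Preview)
Your diagnosis of the gap is exactly right, but your proposed repairs do not close it. The crucial missing mechanism is a \emph{per-set penalty}: in your spine construction, orienting the single lowest-indexed cross edge at $s_i$ outward and all others inward covers $s_i$ while still absorbing every higher element of $S_i$, at the cost of a \emph{single} element copy. Padding does not help. Concretely, take $U=\{1,2,3\}$, $S_1=\{1,3\}$, $S_2=\{2,3\}$, $S_3=\{1,2\}$ (optimal cover size $2$), replace each element by $M$ copies, and at every $s_i$ set the threshold just above the minimum neighbour. This covers all three $s_i$ and every element copy except $e_1^1$, for a total of $3M+2=Mn+m-1$, matching your target for $k=1$. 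So even with cross edges to all $M$ copies, the reduction wrongly certifies a size-$1$ cover. Your ``forced back-edges'' cannot make the padding contingent on $e_j$ being covered, because a back edge always covers its tail regardless; and the preprocessing ``so that every relevant element is the minimum of every set containing it'' is impossible as soon as two sets intersect.

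What the paper does that you are missing is attach to each set node $s_j$ a small gadget (vertices $\ell_j,t_j,r_j,p_j$ and a few extra edges) with the property that \emph{any} incoming cross edge $x_i^h\to s_j$ forces the gadget vertex $t_j$ to be uncovered. Thus ``using'' set $S_j$ even once costs exactly one vertex, independent of how many elements it absorbs. Combined with $k{+}1$ copies of each element (so that leaving one element entirely uncovered costs $k{+}1>k$), this yields the dichotomy: either some element is fully uncovered and you lose at least $k{+}1$, or every element is hit and the hitting sets $J$ satisfy $|J|>k$, losing $|J|>k$ gadget vertices. Your construction has the element-copy half of this idea but not the set-penalty half; adding a gadget that enforces such a penalty is the step you need.
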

Our reduction is from the well-known NP-hard {\sc set cover} problem~\cite{GJ79}. 
An instance of {\sc set cover} consists of a set of elements $X=\{x_1,x_2,\ldots , x_n\}$, a
collection of subsets of $X$, ${\cal C}=\{S_1,S_2,\ldots , S_m\}$ and an
integer $0<k\le m$.  The {\sc set cover} problem asks: Is there a
subcollection ${\cal C}'\subseteq {\cal C}$ such that $|{\cal C}'|\le k$ and $\cup_{S_i\in
  {\cal C}'} S_i = X$?

We prove Theorem \ref{thm:hard} in three steps. 
First, we start by describing the gadget which models {\sc set cover} as an instance of {\sc tree augmentation}.
Second, we describe the intuition behind our result. 
Finally, we give the formal details of correctness.

\subsubsection*{The Reduction}  
We build an instance $G=(V,E\cup \overrightarrow T)$ of {\sc tree augmentation}
corresponding to an instance of {\sc set cover} as follows.  The construction is
illustrated in Figure~\ref{bad.fig}.

\paragraph{Vertices $V$:} The root vertex is $u$.  For each set $S_j$ ($j =
  1, \ldots, m$) we define 5 vertices $r_j, s_j, t_j, \ell_j, p_j$.
  We call the vertices $\{s_1, \ldots, s_m\}$ the {\em set vertices}.
  For each element $x_i$ ($i = 1, \ldots, n$) we
  define a set of $k+1$ {\em element vertices} $X_i = \{x_i^1, \ldots, x_i^{k+1}\}$.
\paragraph{Tree arcs} $\mathbf{\ora{T}}$: 
The tree $\mathbf{\ora{T}}$ consists of the following arcs.
  \begin{enumerate}
	  \item There is a directed path in $\ora{T}$ consisting of the arcs $\overrightarrow{p_ip_{i+1}}$
		  for $1\le i < m$, followed by the arc $\overrightarrow{p_mu}$.
		  We call this path the {\em collection path}.
	  \item For each set $S_j$, there is a directed path in $\ora{T}$ through the vertices
    $r_j,s_j,t_j,u$ in order.  We call these the {\em set paths}.
  \item For each element $x_i$, there is a directed path in $\ora{T}$ consisting of
	  the arcs $\overrightarrow{x_i^h x_i^{h+1}}$ for $1\le i<k+1$.
    We call these the {\em element paths}.
  \item The arc $\overrightarrow{x_n^{k+1}p_1}$ is in $\ora{T}$.
  \item For each $j = 1, ..., m$, $\ora{T}$ contains the arc $\overrightarrow{\ell_j x_1^1}.$
  \end{enumerate}
\paragraph{Non-tree edges $E$:}  For each set $S_j$, we connect every
  element vertex (corresponding to the elements in $S_j$) to the
  corresponding set vertex with non-tree edges $\{s_jx_i^1,\ldots,
  s_jx_i^{k+1}\ : \ x_i \in S_j\}$.  For each set $S_j$ there are 4
  additional non-tree edges: $\{\ell_ju, r_ju, \ell_jt_j, p_jr_j\}$.

\begin{figure}[htb]
\centerline{
\input{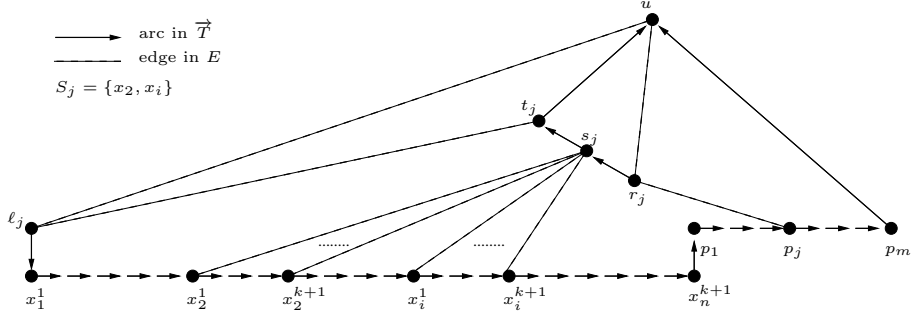}
}
\caption{NP-hardness construction.  The vertices $\ell_j, p_j, r_j,
  s_j, t_j$ (and adjacent edges) are only shown for one set $S_j$.}
\label{bad.fig}
\end{figure}

\subsubsection*{The Idea}
We start by noting that the orientation of $2m$ of the non-tree edges are forced in any acyclic orientation.
\begin{obs}\label{obs:forced}
In any feasible solution, i.e., any acyclic orientation, each edge $r_j u$ must
be oriented as $\ora{r_ju}$ since otherwise $ur_js_jt_j$ would form a directed cycle.
Similarly, in any feasible solution, each edge $\ell_ju$ must be oriented as $\ora{\ell_ju}$ since otherwise $u\ell_jt_j$ would form a directed cycle.
\end{obs}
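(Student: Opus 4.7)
The plan is to directly exhibit, in each case, the short directed cycle that the construction forces if the named edge is oriented the wrong way. Both forcing arguments are local and follow immediately from the definition of the set paths in $\ora{T}$.

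First, I would handle $r_j u$. By the construction, the set path for $S_j$ contributes the arcs $\ora{r_j s_j}$, $\ora{s_j t_j}$, and $\ora{t_j u}$ to $\ora{T}$. If $r_j u$ were oriented as $\ora{u r_j}$, then together with these three tree arcs we close the directed $4$-cycle $u \to r_j \to s_j \to t_j \to u$, contradicting acyclicity of the orientation. Hence every feasible solution must orient $r_j u$ as $\ora{r_j u}$.

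Second, I would address $\ell_j u$. Here the tree arc $\ora{t_j u}$ is again present from the set path, and the non-tree edge $\ell_j t_j$ is in $E$ by construction. If $\ell_j u$ were oriented as $\ora{u \ell_j}$, then under the orientation $\ora{\ell_j t_j}$ of the edge $\ell_j t_j$ we would obtain the directed $3$-cycle $u \to \ell_j \to t_j \to u$ indicated in the statement. To remove any dependence on how $\ell_j t_j$ happens to be oriented, I would alternatively invoke the following general fact: since $\ora{T}$ is a spanning arborescence rooted at $u$, every vertex $v \neq u$ has a directed tree path to $u$, so any non-tree edge incident to $u$ is forced to point towards $u$ in any acyclic orientation, because $\ora{u v}$ concatenated with the tree path from $v$ to $u$ is a directed cycle. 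Applying this to $\ell_j u$ yields the claim.

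There is no real obstacle in this proof; the only work is the bookkeeping of identifying the relevant tree arcs from the construction. I would present the set-path arcs explicitly for $r_j u$ and rely on the arborescence-based argument for $\ell_j u$ to avoid having to appeal to the (a priori unforced) orientation of $\ell_j t_j$.
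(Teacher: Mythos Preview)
Your argument is correct and, for $r_j u$, identical to the paper's: the set-path arcs $\ora{r_js_j},\ora{s_jt_j},\ora{t_ju}$ together with $\ora{ur_j}$ close the $4$-cycle named in the statement.

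For $\ell_j u$ you actually improve on the paper's one-line justification. The paper simply asserts that $u\ell_j t_j$ would be a directed cycle, but that cycle uses the non-tree edge $\ell_j t_j$, whose orientation is not forced at this point (and indeed is later chosen differently depending on whether $S_j$ is in the cover). Your observation that $\ell_j u$ is a back edge---$\ell_j$ has a directed tree path to $u$ via $\ora{\ell_j x_1^1}$, the element paths, and the collection path---gives the clean reason the orientation is forced, independent of how $\ell_j t_j$ is handled. This is the right way to state it; the paper's phrasing is sloppy on this point, and your arborescence argument is what one should write.
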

Hence, in any feasible solution the $2m$ vertices $r_1, ..., r_m$ and $\ell_1, ..., \ell_m$ are covered.
So, the only remaining uncovered vertices are $t_j, s_j, p_j$ for each $j$ and the element vertices, $x_i^1,...,x_i^{k+1}$ for each $i$.

Notice that if $t_j\ell_j$ is oriented as $\ora{t_j\ell_j}$ then no vertex $x_i^h$ can
be covered for any $x_i\in S_j$, $1\le h\le k+1$ without forming a cycle.
However if $\ell_jt_j$ is oriented as $\ora{\ell_jt_j}$ then each of those element vertices $x_i^h$ can be covered by $\ora{x_i^hs_j}$ and so we ``equate'' this case with choosing the set $S_j$ as part of a solution to the set cover problem.
Thus minimizing the number of subsets $S_j$ whose union is $X$ will be seen
to be equivalent to maximizing the number of $t_j$'s that are not covered.
This is the basic idea of our proof. 

\subsubsection*{The Proof}

  We show that there is a solution to the constructed instance of {\sc
    tree augmentation} that covers at least $(k+1)n+4m-k$ vertices if and only
  if there is a solution to the instance of {\sc set cover}.

\paragraph{Solutions to {\sc set cover} imply solutions to {\sc tree
  augmentation}:}
  
Suppose there is a solution ${\cal C}'\subseteq {\cal C}$ to the {\sc
  set cover} instance, that is, $|{\cal C}'|\le k$ and $\cup_{S_j\in
  {\cal C}'} S_j = X$.  Note that if $|{\cal C}'|< k$ we can always
add subsets to the solution and it will still cover all the elements.
So, without loss of generality, we will assume that $|{\cal C}'|=k$.
We orient a subset $E' \subset E$ so that it covers $(k+1)n+4m-k$ vertices. Observation \ref{obs:complete} then says that the orientation of $E'$ can
be extended to an orientation of all of $E$ that covers at least $(k+1)n+4m-k$ vertices and
hence is a solution to the {\sc tree augmentation} instance.

\begin{description}
\item[$E'_1$: Forced edges] 
	As discussed earlier, the edges $\ell_ju$ and $r_ju$ are back edges and so must be oriented toward the root: $\overrightarrow{\ell_ju}, \overrightarrow{r_ju}$.  	These arcs then cover the $2m$ vertices $\ell_j$ and $r_j$ for
	$1\le j \le m$.

\item[$E'_2$: $S_j\in {\cal C}'$] 
	For each $S_j\in {\cal C}'$, orient $\ell_jt_j$ and $p_jr_j$ as $\ora{\ell_jt_j}$, $\ora{p_jr_j}$ respectively. These arcs then cover the $k$ $p_j$'s where $S_j\in {\cal C}$.
	Also for each $x_i \in S_j$, orient the edges $\{s_jx_i^1,\ldots, s_jx_i^{k+1}\}$ toward $s_j$.  
	Since ${\cal C}'$ covers the elements, each $x_i^h$ will have at least one non-tree edge oriented away from it, and hence this covers all of the $(k+1)n$, $x_i^j$ vertices.
	Therefore the orientation of the edges of $E'_2$ cover $(k+1)n + k$ additional vertices. 	
	Note that adding the edges of $E'_2$ to $\overrightarrow{T} \cup E'_1$ does not introduce any cycles because it only directs edges {\em from} the element paths {\em to} the set paths.
	
\item[$E'_3$: $S_j \not \in {\cal C}'$] 
	For each $S_j \not \in {\cal C}'$, we orient the edge $t_j\ell_j$ as $\ora{t_j\ell_j}$ and orient all the edges $\{s_jx_i^1,\ldots, s_jx_i^{k+1}\}$ away from $s_j$.  
	The edges of $E'_2$ cover the $2(m-k)$ additional vertices $t_j$ and $s_j$ where $S_j\not \in {\cal C}$. 
	Notice the solution is still acyclic, since for each $S_j \not \in {\cal C}'$, there is no arc directed into any vertex in the set $\{r_j,s_j,t_j\}$.
	
\end{description}
It follows $E' = E'_1 \cup E'_2 \cup E'_3$ is a feasible solution of size $(k+1)n + 4m - k$ as desired.  

\paragraph{No solution to {\sc set cover} implies no solution to
  {\sc tree augmentation}:}
Suppose there is no subcollection of $\cal C$ of size at most $k$ that
covers all the elements of $X$.  We will show that any feasible orientation of
the non-tree edges
$\overrightarrow E$ will cover less than $(k+1)n+4m-k$ vertices.

We have two facts resulting from the existence of the set paths to the
root in $\overrightarrow T$ and directed paths from $\ell_j$ through
all the element vertices and through the collection path to the root.
Since we assume all the non-tree edges are oriented (w.l.o.g.\ by
Observation~\ref{obs:complete}), the following facts must hold or else
there would be a directed cycle.
\begin{enumerate}[{Fact} 1]
\item If $s_j$ is covered, then $p_j$ is not covered. \label{fact1}

\item If $\overrightarrow{x_i^hs_j} \in \overrightarrow E$ for some
  $h$ ($1\le h\le k+1$), then $\overrightarrow{\ell_jt_j} \in
  \overrightarrow E$ and $t_j$ cannot be covered. \label{fact2}
\end{enumerate}

We have two cases: either some representative vertex $x_i^h$ of each 
element $x_i$ is
covered or there is some element $x_i$ all of whose representatives
$x_i^1, x_i^2,\ldots, x_i^{k+1}$ are not covered.

We start with the former case.  Suppose that for each $i\in [1, \ldots
,n]$ there is some $h(i)\in [1,\ldots ,k+1]$ such that $x_i^{h(i)}$ is
covered.  Let $J$ be the subset of indices $[1,\ldots,m]$ such that
the $x_i^{h(i)}$'s are covered by edges oriented to $s_j$ for $j \in
J$.  As we assume that the best set cover requires more than $k$ sets,
$|J| > k$.  By Fact~\ref{fact2}, $t_j$ is not covered for any $j \in
J$.  Then, by Fact~\ref{fact1}, it follows that for $j\in J$ at most 3
of $\{\ell_j, r_j, s_j, t_j, p_j\}$ are covered.  Also note by Facts~1 and~2, at most $4$ of $\{\ell_j, r_j, s_j, t_j, p_j\}$ can be
covered for $j\not\in J$.  Therefore, $\overrightarrow E$ can cover at
most $4m-|J|+(k+1)n < 4m-k+(k+1)n$ vertices.

We now prove the latter case.  Suppose that for some $i \in
[1,\ldots,n]$, $x_i^h$ is not covered for any $h \in [1,\ldots, k+1]$.
By Fact~\ref{fact1}, it follows that at most 4 of $\{\ell_j, r_j, s_j,
t_j, p_j\}$ are covered.  Therefore $\overrightarrow E$ can cover at
most $4m+(k+1)(n-1) = 4m+(k+1)n-k-1$ vertices.

\section{Bounded treewidth {\sc tree augmentation}}
\label{sec:boundedtw}

We show in this section that for $G$ of bounded treewidth, {\sc tree
augmentation} can be solved optimally via dynamic programming.  {\em
Treewidth} is a measure of how far a graph is from being a tree.  It
is known that many NP-hard graph problems become tractable
in graphs of bounded treewidth~\cite{CM92}.
Formally, $G = (V,E)$ has treewidth
$\tw$ if there is a tree decomposition $(\Upsilon, \Gamma)$ of $G$ where
each node $\nu \in \Upsilon$ corresponds to
a subset $S_{\nu}$ of $V$,
and $\Gamma$ is a tree on $\Upsilon$, satisfying
the following four properties.
\begin{enumerate}
\item No more than $\tw+1$ vertices of $V$ are mapped to any one node
  of $\Gamma$, i.e. $|S_{\nu}|\le \tw+1$ for $\nu\in \Upsilon$.
\item 
 The union of $S_{\nu}$ is  $V$, i.e. every
vertex of $V$ is mapped to some node of $\Gamma$.
\item 
 For every edge $uv$ in $E$, $u$ and $v$ are mapped to some
  common node of $\Gamma$.
\item 
If, for $\nu_1\in \Upsilon$ and $\nu_2\in \Upsilon$, $S_{\nu_1}$ and
$S_{\nu_2}$ contain a common vertex of $V$, then for all nodes $\nu$
of in the (unique) path between $\nu_1$ and $\nu_2$ in $\Gamma$,
$S_{\nu}$ contains $v$ as well.
\end{enumerate}
We may assume without loss of generality that $\Gamma$ is a rooted
binary tree with $O(|V|)$ nodes~\cite{Bodlaender93}.  
Note that given a graph $G$ and an integer $k$ it is NP-hard to determine if the treewidth of $G$ is at most $k$~\cite{ACP87}.
However, for any fixed constant $k$, Bodlaender gives a linear time algorithm which determines if a graph $G$ has treewidth at most $k$, and if so, finds a tree-decomposition of $G$ with treewidth at most $k$ in linear time~\cite{Bod96}.

In this section, we prove:
\begin{theorem}\label{thm:tw}
For any constant $k$, and any graph $G$ of treewidth at most $k$,
{\sc Tree augmentation} 
can be solved in linear time using dynamic programming.
\end{theorem}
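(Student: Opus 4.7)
The plan is to do a bottom-up dynamic program on a nice tree decomposition of $G$ (with $O(n)$ bags of size at most $k+1$ and operations of type introduce-vertex, introduce-edge, forget-vertex, and join, cf.~\cite{Bodlaender93}), processing each edge of $E\cup \ora{T}$ exactly once at an introduce-edge node. By Observation~\ref{obs:complete} it suffices to search over orientations of \emph{all} of $E$, so at each introduce-edge node for an undirected edge we branch over its two possible orientations, and at each introduce-edge node for an arc of $\ora{T}$ we simply install the given direction.

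For a bag $S_\nu$, the DP state records (i) for each vertex $v\in S_\nu$, a bit indicating whether $v$ has already received an outgoing non-tree arc in the processed subgraph, and (ii) for every ordered pair $(u,v)\in S_\nu\times S_\nu$, a bit indicating whether $u$ can reach $v$ by a directed path in the union of $\ora{T}$-arcs and $\ora{E}$-arcs already processed. The table entry stores the maximum weight of vertices \emph{already forgotten} (in the weighted case, Corollary applies), subject to this state being realizable by some acyclic partial orientation of the processed edges. The number of distinct states per bag is $2^{(k+1)+(k+1)^2}=2^{O(k^2)}$, which is a constant for fixed $k$.

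The transitions are routine. At an introduce-vertex node for $v$, the new row/column of the reachability matrix is initialized so that $v$ reaches only itself and is not reached by others; the coverage bit is set to $0$. At an introduce-edge node for an arc $\ora{uv}$ (either a $\ora{T}$-arc, or the chosen orientation of an undirected $uv$), we reject the branch if the current state already records a path from $v$ to $u$ (such a path would close a cycle); otherwise we add $(u,v)$ to the reachability relation, take its transitive closure within the bag, and—if the arc came from $E$—turn on the coverage bit for $u$. At a forget-vertex node for $v$ we project $v$ out of the reachability matrix (after transitive closure through $v$, which is already maintained), add $w(v)$ to the stored value iff $v$'s coverage bit is $1$, and drop the bit for $v$. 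At a join node we take the coordinatewise OR of the coverage bits and the transitive closure of the union of the two reachability relations, rejecting any branch in which this closure contains both $(u,v)$ and $(v,u)$ for some distinct $u,v\in S_\nu$. The answer is read off at the root, whose bag we may assume is empty.

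The only subtle point is soundness of the reachability summary: because a directed path between two current-bag vertices may leave the bag and come back through vertices in the processed subgraph, the state must record reachability that could be completed later via bag vertices. This is handled correctly because the tree-decomposition property guarantees that any directed path in the final graph enters and leaves the processed subgraph only through vertices currently in $S_\nu$, so transitive closure maintained on $S_\nu\times S_\nu$ together with the cycle-rejection test at each introduce-edge and join node exactly characterizes acyclicity of the final graph—this is the standard separator argument used in tree-decomposition DPs for orientation problems, and it is the main (and only) nontrivial ingredient. Constantness of the state space then yields the claimed $O(n)$ running time, and the same DP with the weighted bookkeeping above also solves the weighted version.
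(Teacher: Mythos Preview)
Your proof is correct. Both you and the paper run a DP over a tree decomposition, but with different state representations and decomposition flavours. The paper's state at a bag $S_\nu$ is a \emph{permutation} $P$ of $S_\nu$ (serving as a linear extension of the topological order on the bag) together with the subset $C\subseteq S_\nu$ of covered bag vertices; it works over a plain binary tree decomposition, merges the two children in one step by requiring the children's permutations to agree with $P$ on shared vertices, and argues acyclicity of the merge via a shortest-violating-path argument. You instead use a nice tree decomposition with introduce-edge nodes and record the full reachability relation on $S_\nu\times S_\nu$ plus coverage bits. Your encoding is the textbook device for handling acyclicity in treewidth DPs and makes cycle detection transparent (reject whenever antisymmetry fails at an introduce-edge or join); the paper's permutation encoding is a coarser summary---many permutations extend the same reachability partial order---which yields a smaller table, $(k{+}1)!\cdot 2^{k+1}$ versus your $2^{(k+1)^2+(k+1)}$, though both are $O(1)$ for fixed $k$ and give the linear bound. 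Processing one edge at a time also spares you the paper's somewhat delicate two-child merge argument, at the cost of a slightly larger state space.
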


\begin{proof}
Let $(\Upsilon, \Gamma)$  be a tree decomposition of $G$ of treewidth at most $k$.
Since $\Gamma$ is rooted, for each node $\nu \in \Gamma$, we can define $\Gamma_\nu$ as the subtree of $\Gamma$ rooted at $\nu$.
For a tree node $\nu$ of $\Gamma$, let $G_{\nu}$ be the subgraph of $G$ whose vertex set is $\{v \in G~|~ v \in S_\nu~s.t.~\nu \in \Gamma_\nu\}$ and edge set is $\{uv \in E(G)~|~ u,v \in S_\nu~s.t~\nu \in \Gamma_\nu\}$.  

For each tree node $\nu$ of $\Gamma$, each permutation $P$ of the vertices of $S_\nu$ 
and each subset $C$ of $S_\nu$, we determine two tables, $\ora {\mbox{\sc Tab}}_\nu[P,C]$ and $ {\mbox {\sc Tab}}_\nu[P,C]$:
\begin{enumerate}
\item [(i)] $\ora {\mbox{\sc Tab}}_\nu[P,C]$ is an optimal orientation of the edges $G_{\nu}$
that is consistent with permutation $P$ and covers exactly the vertices $C$, such that the number of vertices covered by $\ora {\mbox{\sc Tab}}_\nu[P,C]$ is maximized.
\item [(ii)] ${\mbox {\sc Tab}}_\nu[P,C]$ is the number of vertices covered by $\ora {\mbox{\sc Tab}}_\nu[P,C]$
\end{enumerate}
If a permutation $P$ contradicts the partial order on the vertices enforced by $\ora T$, then we set {\sc Tab}$_\nu[P,C]=-\infty$ and $\ora {\mbox{\sc Tab}}_\nu[P,C] = \emptyset$ for all $C\subseteq S_{\nu}$.  Likewise, if no orientation of $G_\nu$ can cover $C$, we set {\sc Tab}$_\nu[P,C]=-\infty$ and $\ora {\mbox{\sc Tab}}_\nu[P,C] = \emptyset$ for all permutations $P$ of $S_\nu$.  For simplicity of presentation, we assume that all entries of {\sc Tab} are initialized to $-\infty$ and all entries of $\ora{\mbox{\sc Tab}}$ are initialized to $\emptyset$.

For the root $r$ of $\Gamma$, it follows that the maximum of ${\mbox {\sc Tab}}_r[P,C]$ taken over all permutations $P$ and subsets $C$ of $S_r$ is the value of an optimal solution and the corresponding $\ora {\mbox{\sc Tab}}_\nu[P,C]$ is an optimal solution.
Hence, to complete the proof of Theorem \ref{thm:tw}, it is enough to show how to determine the entries of our dynamic programming table.
We do so in two steps.
First, we determine the entries of {\sc Tab}$_\nu$ for each leaf $\nu \in \Gamma$.
Second, we determine the entries of {\sc Tab}$_\nu$ for each non-leaf node $\nu \in \Gamma$ given that the entries of {\sc Tab}$_{\nu_1}$ and {\sc Tab}$_{\nu_2}$ for its associated child nodes, $\nu_1$ and $\nu_2$, have already been determined.  

Assume  $\nu \in \Gamma$ is a leaf node.  For each permutation $P$ that is consistent with $\ora{T}$, $\ora{E_\nu}$ be the orientation of the edges of $G_\nu$ implied by $P$ and let $C^\star$ be the subset of $S_\nu$ covered by $\ora{E_\nu}$.  By construction,  {\sc Tab}$_\nu$ contains the desired entries.

Assume $\nu$ is an internal node such that the {\sc Tab} entries associated with the children of $\nu$, namely $\nu_1$ and $\nu_2$, have been computed.  
We consider all permutations $P_1$ of $S_{\nu_1}$ and $P_2$ of $S_{\nu_2}$ together with all subsets $C_1 \subseteq S_{\nu_1}$ and $C_2 \subseteq S_{\nu_2}$.
We use the entries of ${\mbox{\sc Tab}}_{\nu_1}[P_1,C_1]$ and ${\mbox{\sc Tab}}_{\nu_2}[P_2,C_2]$ to construct ${\mbox{\sc Tab}}_{\nu}[P,C]$.
Now, not all choices of $P_1, P_2, C_1$ and $C_2$ lead to valid solutions.
Indeed, the partial order given by the permutation $P_1$ must be consistent with $P$; analogously $P_2$ must be consistent with $P$. 
Additionally, if $v$ is in $C_1$ and $v \in S_\nu$ then $v$ must also be $C$; the analogous condition holds for $C_2$. 
For $i \in \{1,2\}$, we call $P_i$ and  $C_i$ {\em good} if they satisfy these conditions.
We now show that 
\begin{eqnarray}
{\mbox{\sc Tab}}_\nu[P,C] 
= |C| &+& \max_{{\rm good}~P_1, C_1}
\left\{  
{\mbox{\sc Tab}}_{\nu_1}[P_1,C_1] - |C_1 \cap C|
\right\} 
\\\nonumber &+& \max_{
{{\rm good}}~P_2, C_2}
\left\{  
{\mbox{\sc Tab}}_{\nu_2}[P_2,C_2]  - |C_2 \cap C|
\right\},
\label{eqn1}
\end{eqnarray}
and $\ora {\mbox{\sc Tab}}_\nu[P,C]$ can be determined from the entries of {\sc Tab}$_{\nu_1}$ and {\sc Tab}$_{\nu_2}$.

Fix a permutation $P$ and subset $C$ of $S_\nu$.
Let $\ora O$ be any orientation of $G_\nu$ that is valid with respect to $P$ and $C$.
Let $A_1$ be the vertices covered in $G_{\nu_1}$ and $A_2$ be the vertices covered in $G_{\nu_2}$.
By the fourth property of tree decompositions, the number of vertices covered by $\ora O$ is exactly $|C| + |A_1 \setminus C| + |A_2 \setminus C|$.
Letting $\ora O_1$ be the restriction of $\ora O$ to $G_{\nu_1}$, for any good $P_1$ and $C_1$ we have $|A_1| \le {\mbox{\sc Tab}}_{\nu_1}[P_1,C_1].$
Hence, $|A_1| \le \max_{{\rm good}~ P_1, C_1}{\mbox{\sc Tab}}_{\nu_1}[P_1,C_1],$ and so, 
$|A_1 \setminus C| \le \max_{{\rm good}~ P_1, C_1}( {\mbox{\sc Tab}}_{\nu_1}[P_1,C_1] - |C_1 \cap C|).$
Similarly, $|A_2 \setminus C| \le \max_{{\rm good}~ P_2, C_2}( {\mbox{\sc Tab}}_{\nu_2}[P_2,C_2] - |C_2 \cap C|).$
Hence, 
${\mbox{\sc Tab}}_\nu[P,C]$ is at most the righthand side of Equation \ref{eqn1}.

To complete the proof it is enough to show there exists good $P_1, C_1$ and good $P_2, C_2$ such that the corresponding entries of $\ora {\mbox{\sc Tab}}_{\nu_1}[P_1,C_1]$ and $\ora {\mbox{\sc Tab}}_{\nu_2}[P_2,C_2]$ can be used to give an acyclic orientation of value equal to the righthand side of Equation \ref{eqn1}.
Let good $P_1,C_1$ be chosen to maximize ${\mbox{\sc Tab}}_{\nu_1}[P_1,C_1] - |C_1 \cap C|$, and let good $P_2,C_2$ be chosen to maximize ${\mbox{\sc Tab}}_{\nu_2}[P_2,C_2] - |C_2 \cap C|$.
We first orient the edges $S_\nu$ by the permutation $P$.  
By the fourth property of tree decompositions, every remaining edge is completely contained in either $G_{\nu_1}$ or completed contained in $G_{\nu_2}$.
For these edges we use the orientations of $\ora {\mbox{\sc Tab}}_{\nu_1}[P_1,C_1]$ and $\ora {\mbox{\sc Tab}}_{\nu_2}[P_2,C_2]$, respectively.  
Clearly, this orientation covers the desired number of nodes; it only remains to show it is acyclic.  

For the purpose of contradiction, let $u,v \in S_\nu$ be such that the permutation $P$ places $u$ before $v$ but {\sc Path}$_{vu}$, a directed path from $v$ to $u$, already exists. 
If there are multiple such paths, we choose the shortest path.
If {\sc Path}$_{vu}$ contains an internal node $x\in S_{\nu}$, then
{\sc Path}$_{vx}$ and {\sc Path}$_{xu}$ are also directed paths. Since
{\sc Path}$_{vu}$ is the shortest, the permutation $P$ must ensure that $v$ is before $x$ and $x$ is before $u$. However, $P$ also enforces $u$ is before
$v$, which is a contraction.
Therefore, none of the internal nodes in {\sc Path}$_{vu}$ can be in
$S_{\nu}$. By properties 3 and 4 of the tree decomposition, every
internal node of {\sc Path}$_{vu}$ must be contained entirely in one
of the subtrees, say the one rooted at $\nu_1$. In addition, $u$ and
$v$ must be contained in $S_{\nu_1}$.  If a permutation
$P_1$ ensures $v$ is before $u$ then $P_1$ and $P_2$ would be
inconsistent; if $P_1$ ensures $u$ is before $v$ then $\nu_1$ would be a
earlier node in which a cycle appears.
It now follows that $\ora {\mbox{\sc Tab}}_{\nu}[P,C]$ is the desired orientation.
Moreover, since we can compute it by considering the $O((k!)^4)$ possible choices for $P_1,P_2, C_1$ and $C_2$, it follows that it follows that constructing {\sc Tab} can be done in linear time.
\qed 
\end{proof}
The above dynamic programming argument generalizes to the 
weighted case, if we keep track of the node weights rather than 
the number of nodes.
\begin{cor}
For any constant $k$, and any graph $G$ of treewidth at most $k$,
{\sc Weighted Tree augmentation} 
can be solved using dynamic programming.
\end{cor}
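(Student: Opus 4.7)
The plan is to adapt the dynamic programming scheme from Theorem~\ref{thm:tw} essentially verbatim, replacing vertex cardinalities by total vertex weights throughout. Concretely, I would redefine ${\mbox{\sc Tab}}_\nu[P,C]$ to record the maximum of $\sum_{v \in A} w(v)$ over all orientations of $G_\nu$ that are consistent with the permutation $P$ and whose set of covered vertices $A$ satisfies $A \cap S_\nu = C$. The companion table $\ora{\mbox{\sc Tab}}_\nu[P,C]$ still stores a witness orientation, and infeasible entries are again initialized to $-\infty$ and $\emptyset$. Writing $w(U) := \sum_{v \in U} w(v)$ for any subset $U \subseteq V$, the recurrence from Equation~\ref{eqn1} becomes
\[
{\mbox{\sc Tab}}_\nu[P,C] = w(C) + \max_{\text{good } P_1,C_1}\!\bigl({\mbox{\sc Tab}}_{\nu_1}[P_1,C_1] - w(C_1 \cap C)\bigr) + \max_{\text{good } P_2,C_2}\!\bigl({\mbox{\sc Tab}}_{\nu_2}[P_2,C_2] - w(C_2 \cap C)\bigr),
\]
where the notion of ``good'' (consistency of $P_i$ with $P$, and of $C_i$ with $C$ on the overlap $S_\nu \cap S_{\nu_i}$) is unchanged from the unweighted case.

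The correctness argument from Theorem~\ref{thm:tw} then transfers without modification. The upper-bound direction relies on property~4 of the tree decomposition: every covered vertex not in $S_\nu$ appears in at most one of $G_{\nu_1}, G_{\nu_2}$, so its weight is contributed exactly once on the right-hand side, while any vertex in $S_\nu \cap S_{\nu_i}$ that happens to lie in $C$ is accounted for by the $w(C)$ term and then properly discounted by the $-w(C_i \cap C)$ term, avoiding any double counting. The matching lower bound is obtained by pasting together the witness orientations returned by the two children, and the proof that the combined orientation is acyclic is identical to the unweighted case: the shortest-path contradiction argument depends only on the combinatorial structure enforced by the permutations, not on the weights.

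Finally, the running time is unaffected. For fixed treewidth $k$ there are only $O((k+1)! \cdot 2^{k+1})$ table entries per bag, and each entry is computed by examining $O((k!)^2 \cdot 2^{2(k+1)})$ pairs of child configurations, so the total running time is linear in $|V|$. I do not anticipate a substantive obstacle in this proof; the only subtlety is the weighted analog of the overlap correction between parent and child bags, and that is handled cleanly by the $-w(C_i \cap C)$ terms in the recurrence above.
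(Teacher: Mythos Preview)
Your proposal is correct and matches the paper's approach exactly: the paper's proof of this corollary is the single sentence ``The above dynamic programming argument generalizes to the weighted case, if we keep track of the node weights rather than the number of nodes,'' and you have simply spelled out what that sentence means in detail. Your weighted recurrence and the reuse of the acyclicity and separation arguments are precisely what the paper intends.
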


\section{A PTAS for {\sc BFS tree augmentation} in planar graphs}
\label{sec:planar}

In this section we consider a special case in which the input graph
$G=(V,\ora T\cup E)$ is planar and $\ora T$ is the breath-first-search
(BFS) tree of $G$. (Specifically, the $T$ is a BFS tree of the
undirected version of $(V,T \cup E)$.)  We show that Baker's technique, described in a moment,
can be used to design a polynomial-time approximation scheme (PTAS)
for this special case.  A PTAS is, for a fixed
integer $d$, a polynomial (in $|G|$) time algorithm that finds a
solution of value achieving at least a $1-{1\over d}$ fraction of
the optimal solution's value.

Baker's technique is a shifting technique for designing PTASes for
planar graph instances of problems~\cite{Baker1994}.  Baker introduced
this technique to solve NP-hard problems such as {\sc independent set}
and {\sc vertex cover}; the constraints for these problems are defined
locally, for neighborhoods of vertices.  Usually, Baker's technique
cannot be used to solve problems with a global constraints, such as
our acyclicity constraint.  However, the technique involves separating
the graph at BFS layers; when $\ora T$ is a BFS tree, we can guarantee
acyclicity across the separators.  We show how to use Baker's
technique here to prove:

\begin{theorem}\label{thm:planar} There is a PTAS for {\sc Tree augmentation} when the input is a planar graph $G = (V, E \cup \ora{T})$ such that $\ora T$ is a BFS tree of $G$.
\end{theorem}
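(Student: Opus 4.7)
The plan is to apply Baker's shifting technique to the BFS layering of $G$. Let $L_0, L_1, \ldots, L_h$ denote the BFS layers induced by $\ora T$, with the root in $L_0$. For each shift $s \in \{0, 1, \ldots, d-1\}$, define the $j$-th \emph{slab} $S_j^s$ to be the subgraph induced on the $d+1$ consecutive layers $L_{s+jd}, L_{s+jd+1}, \ldots, L_{s+(j+1)d}$ (clipped at the endpoints as needed), so that consecutive slabs overlap only on a single \emph{interface layer}. The key structural observation is that because $\ora T$ is a BFS tree, every non-tree edge in $E$ has endpoints in the same BFS layer or in adjacent layers; hence every non-tree edge lies entirely within some slab. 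Moreover, each slab is a planar graph of BFS depth at most $d$, so its treewidth is $O(d)$ by a standard result, and Theorem \ref{thm:tw} solves {\sc tree augmentation} on each slab optimally in linear time.

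I then combine the per-slab solutions, enforcing consistency on shared interface layers by extending the dynamic program of Theorem \ref{thm:tw} so that slabs sharing an interface layer agree on the orientations of edges within (and incident to) that layer. To verify global acyclicity, suppose a directed cycle $C$ exists in the combined orientation. Tree arcs strictly decrease depth by one, and non-tree arcs change depth by at most one. Since each per-slab orientation is internally acyclic, $C$ must use non-tree arcs from at least two different slabs, which forces $C$ to cross an interface layer. A case analysis, using the BFS property that non-tree edges span at most one layer, will show that the depth changes along $C$ cannot sum to zero, contradicting the existence of $C$.

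For the approximation guarantee, I use the standard Baker averaging argument. Let $V^*$ be the set of vertices covered by an optimal solution. As $s$ ranges over $\{0, \ldots, d-1\}$, the interface layers partition $V$ into $d$ classes, so for some shift $s^*$ at most $|V^*|/d$ vertices of $V^*$ lie in interface layers. Since each per-slab solution is optimal restricted to its slab and $V^*$ restricted to any slab is itself a feasible solution there, the combined solution covers at least $|V^*| - |V^*|/d = (1 - 1/d)\,\OPT$ vertices (the weighted analogue is identical). Running the algorithm for all $d$ shifts and returning the best gives the desired PTAS. The main obstacle I anticipate is establishing the global acyclicity rigorously: Baker's technique ordinarily handles only locally-defined objectives, so the combinatorial care needed to show that no directed cycle can cross an interface layer—together with the bookkeeping to make per-slab DPs agree on shared interface layers—is the bulk of the technical content.
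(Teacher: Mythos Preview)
Your overall strategy (Baker's technique over BFS layers, bounded-treewidth DP on slabs, averaging over shifts) matches the paper's, but there is a genuine gap in how you glue the slabs together.

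The difficulty is exactly the one you flag as ``the main obstacle,'' and your proposed resolution does not work. Agreeing on the orientations of edges \emph{within} the interface layer is neither sufficient nor tractable. It is not sufficient because a directed cycle need not use any edge lying inside the interface layer: in slab $j$ you can have a directed path $a\to b$ with $a,b$ both in the interface layer (dip up into the slab and come back down via downward-oriented non-tree arcs), and in slab $j+1$ a directed path $b\to a$ (dip down and come back up); together these form a cycle even though both slabs are internally acyclic and agree on all interface-layer edge orientations. Preventing this would require the two slabs to agree on the full \emph{partial order} they induce on interface vertices, not merely on local edge orientations, and an interface layer can contain $\Omega(n)$ vertices, so enumerating such states blows up the DP. Your sentence ``the depth changes along $C$ cannot sum to zero'' is also not an argument: depth changes along any directed cycle sum to zero by definition, so nothing is being ruled out.

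The paper avoids all of this by a cleaner decomposition. Rather than overlapping slabs with interface layers, it \emph{deletes} the set $F^k$ of all edges and arcs going between levels $i$ and $i{+}1$ for every $i\equiv k\pmod d$, obtaining vertex-disjoint components of BFS depth at most $d$. It solves each component optimally by Theorem~\ref{thm:tw}, and then reinserts the deleted non-tree edges with a \emph{fixed} orientation (all toward the root). Acyclicity is then immediate: any cycle spanning two components must cross some $F_i$ in both directions, but every arc in $F_i$ (tree or non-tree) now points the same way. The approximation bound follows by averaging over $k$, using a minimal edge set $E^\star$ with $|E^\star|=\OPT$ so that $\min_k |E^\star\cap F^k|\le \OPT/d$. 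Adopting this edge-deletion-plus-uniform-orientation step in place of your overlapping-slab consistency mechanism closes the gap.
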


\begin{proof}
Label each vertex with its BFS level from the root of $\ora T$.  Let
$F_i$ be the subset of arcs and edges ($\ora T \cup E$) that have one
endpoint in level $i$ and one endpoint in level $i+1$.  Let $F^k =
\cup_{i = k\bmod d} F_i$ for $k = 1,\ldots d$.

Let ${\cal S}_k$ be the connected components of $(V,(T \cup E)
\setminus F_k)$.  Consider a component $S$ of ${\cal S}_k$ and attach
all the nodes at the smallest level in $S$ to a newly created root
node $r_S$ by arcs to create $S'$; these new arcs plus the arcs of
$\ora T \cap S$ forms a BFS spanning tree $\ora T_S$ directed toward
$r_S$.  Since $\ora T_S$ has depth at most $d$, $S'$ has treewidth at
most $3d$ by way of:
\begin{theorem}[Baker~\cite{Baker1994}]
  Given a planar graph $G$ with rooted spanning tree of depth $d$ a
  tree decomposition of width at most $3d$ can be found in $O(d|G|)$
  time.
\end{theorem}

We are now able to describe the PTAS of Theorem \ref{thm:planar}.
By the algorithm of Section~\ref{sec:boundedtw}, we can optimally solve the {\sc tree augmentation} problem in each of the components of ${\cal S}_k$ in polynomial time. 
For each $k = 1,...,d$, compute the optimal solutions corresponding to each component of ${\cal S}_k$.  
Let $\ora E_k$ be the orientation given by the union of these solutions with the edges of $F^k\cap E$ oriented from high-to-low BFS level.  
Return the best solutions of $\{\ora E_1, \ldots, \ora E_d\}$.
To complete the proof Theorem \ref{thm:planar}, it remains only to prove acyclicity and near-optimality:
\paragraph{Acyclicity} First notice that these orientations are
acyclic.  Any directed cycle in $\ora E_k \cap \ora T$ would have to
include arcs in multiple components of ${\cal S}_k$ and, in
particular, would travel from a low BFS-level to a high
BFS-level and back, crossing $F_i$ in each direction for some $i =
k\bmod d$.  However, since $T$ is a BFS tree, each arc in $\ora T \cap
F_i$ is oriented from level $i$ to level $i+1$.  By design the edges
of $F_i \cap E$ are also oriented from level $i$ to level $i+1$.
Therefore $\ora E_k \cap \ora T$ is acyclic.
\paragraph{Near-Optimality} Let $E^\star$ be a minimal subset of $E$ such that an optimal solution
$\ora E^\star$ of $G=(V,\ora T\cup E^\star)$ covers as many vertices
as an optimal solution for $G=(V,\ora T\cup E)$.  That is, every
vertex is the starting point for at most one arc of $\ora E^\star$ and
so the maximum number of vertices that can be covered is $|E^\star|$.

$F^k$ is a partition of a subset of $\ora T \cup E$.  Therefore
\begin{equation}\label{ineq}
\min_k |E^\star \cap F^k| \leq {1\over d} \sum_k |E^\star \cap F^k|
\leq {1\over d} |E^\star|
\end{equation}
Consider the index $k^\star$ that is the argument the above minimum.
For each component $S$ of ${\cal S}_k$, $\ora E_{k^\star} \cap S$
covers at least as many vertices as $\ora E^\star$, since $\ora
E_{k^\star}$ is optimal for $S$.  Therefore $\ora E_{k^\star}$ covers
at least $|E^\star|-|E^\star \cap F^{k^\star}|$ vertices.  By
Inequality~\ref{ineq}, $\ora E_{k^\star}$ covers at least $(1-{1\over
  d})|E^\star|$ vertices.
This completes the proof.
\qed \end{proof}
\begin{cor}
There is a PTAS for {\sc Weighted Tree augmentation} when the input is
a planar graph $G = (V, E \cup \ora{T})$ such that $\ora T$ is a BFS
tree of $G$.
\end{cor}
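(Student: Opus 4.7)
The plan is to reuse the algorithmic framework from the proof of Theorem~\ref{thm:planar} verbatim, substituting the weighted bounded-treewidth subroutine (the corollary following Theorem~\ref{thm:tw}) in place of the unweighted one, and then to upgrade the near-optimality accounting from ``number of arcs'' to ``weight of covered vertices.'' Concretely, I would label vertices by BFS level, form the layer-edge groups $F_i$ and the shifted groups $F^k = \bigcup_{i \equiv k \pmod d} F_i$ for $k = 1,\ldots,d$, take the connected components ${\cal S}_k$ of $(V,(T\cup E)\setminus F_k)$, attach an artificial root to each component (getting depth at most $d$ and thus treewidth at most $3d$ by Baker's theorem), and solve weighted {\sc tree augmentation} optimally on each via the weighted-treewidth dynamic program. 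As before, orient the edges in $F^k\cap E$ from the higher BFS level to the lower BFS level, take the union $\ora E_k$, and output the heaviest of $\ora E_1,\ldots,\ora E_d$.

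The acyclicity argument requires no change. It is a purely structural observation: any cycle would have to cross some $F_i$ in both directions, but every arc of $\ora T\cap F_i$ and every oriented edge of $F_i\cap E$ points from level $i$ to level $i+1$, which is impossible. Weights play no role here, so the argument transfers without modification.

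The main content to rewrite is the near-optimality bound. Fix an optimal weighted solution $\ora E^\star$ and, by Observation~\ref{obs:complete}, assume it is minimal, so every covered vertex $v$ has a unique outgoing arc $e_v\in\ora E^\star$. Charge weight $w(v)$ to the arc $e_v$; the total charge is exactly the optimal weight $W^\star$. Each charged arc lies in at most one $F^k$ (since the groups $F^k$ partition the layer edges), so
\begin{equation*}
\min_{k} \sum_{v\,:\,e_v\in F^k} w(v) \ \le\ \frac{1}{d}\sum_{k=1}^d \sum_{v\,:\,e_v\in F^k} w(v) \ \le\ \frac{1}{d}\,W^\star.
\end{equation*}
Pick $k^\star$ achieving this minimum. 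Restricting $\ora E^\star$ to any component $S$ of ${\cal S}_{k^\star}$ gives a feasible weighted solution for $S$, so the weighted-treewidth subroutine finds a solution for $S$ whose covered-weight is at least as large. Summing over components, $\ora E_{k^\star}$ covers weight at least $W^\star - \frac{1}{d}W^\star = (1-\tfrac{1}{d})W^\star$, as required.

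The only step that requires any thought is confirming the charging: it rests on the fact that Observation~\ref{obs:complete} applies unchanged in the weighted setting (adding arcs only adds coverage, never decreases weight), so minimality is without loss of generality. Running times are polynomial for any fixed $d$ because the weighted bounded-treewidth algorithm runs in polynomial time and is applied to linearly many components.
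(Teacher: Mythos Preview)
Your proposal is correct and mirrors the paper's approach exactly: the paper states this corollary without a separate proof, implicitly relying on the same substitution you make (weighted bounded-treewidth subroutine plus a weighted pigeonhole over the charges $w(v)\mapsto e_v$). One small quibble: minimality of $E^\star$ is not a consequence of Observation~\ref{obs:complete} (which is about \emph{adding} arcs), but is the trivial dual observation that redundant outgoing arcs can be dropped without decreasing covered weight; this does not affect the argument.
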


\section{Two-Arm {\sc tree augmentation}}
\label{sec:twoarm}
We consider a special case in which the tree $\ora T$ consists of
exactly two directed paths and give a polynomial-time dynamic program
for finding an optimal solution to {\sc tree augmentation}.  $\ora T$
has root $\ell_0 = r_0$ and two directed paths to the root: the {\em
  left arm} with vertices in order from leaf to root $\ell_{n_\ell},
\ell_{n_\ell - 1}, \ldots, \ell_{1}, \ell_0$ and the {\em right arm}
with vertices in order from leaf to root $r_{n_r}, r_{n_r - 1},
\ldots, r_{1}, r_0$.  We use inequalities to compare the indices of
these vertices (i.e., $r_i < r_j$ if $i < j$).

Any edge in $E$ with both endpoints in a single
arm is a back edge and must be oriented toward the root; we denote
this orientation by $\ora B$ as in Section~\ref{sec:comp}.  Each cross
edge $e \in C \subseteq E$ has a left endpoint $\ell(e)$ and a right
endpoint $r(e)$.  We let $\ora e = \ora{\ell(e)r(e)}$ denote the
left-to-right orientation and $\ola e = \ola{\ell(e)r(e)}$ denote the
right-to-left orientation of $e$.

We sort the cross edges first by left, then by right endpoint.  Namely
$C = \{e_1, e_2, \ldots, e_m\}$ such that $i < j$ only if either
$\ell(e_i) < \ell(e_j)$ or $\ell(e_i)= \ell(e_j)$ and $r(e_i) <
r(e_j)$.  (Note that we may assume that there are no parallel edges as
all parallel edges would need to be oriented consistently to maintain
acyclicity.)  For each $k = 0,1,
\ldots, m$ and each $j = 1,\ldots,n_r+1$ we determine
\[
  \mbox{the orientation }\ora{C_{j,k}}\mbox{ of }C_{j,k} =
  \{e_i \ : \ i\le k, r(e_i) < r_j \} \mbox{ that maximizes}
\]\vspace{-5mm}
\[\mbox{the number } c_{j,k} \mbox{ of
    endpoints of }C_{j,k}\mbox{ covered by }\ora{C_{j,k}}\cup\ora B.
\]
(The notation $\ora{C_{j,k}}$ does not indicate orienting all the edges in
$C_{j,k}$ from left to right.)  
Clearly, the solution to {\sc tree augmentation} is $\ora{C_{n_r,m}}
\cup \ora B$.

Note that the sets $C_{j,0}$ are empty and so the values $c_{j,0}$
denote the number of vertices covered by $\ora B$.  We denote the
coverage by $\ora B$ by:
\[
 \delta(v) = \left\{
   \begin{array}{ll}
     1 & \mbox{if }v\mbox{ is not covered by }\ora B \\
     0 & \mbox{otherwise}
   \end{array}
 \right.
\]
For any $j$ and for $k > 0$, we determine $\ora{C_{j,k}}$ from
$\ora{C_{j,t}}$ for $t < k$ and $\ora{C_{t,k}}$ for $t < k$.  This
allows us to compute $\ora{C_{n_r,m}}$ via dynamic programming.

If $r_j \le r(e_k)$, then $e_k \notin C_{j,k}$ and so $C_{j,k} =
C_{j,k-1}$; therefore $\ora{C_{j,k}} = \ora{C_{j,k-1}}$.  If $r_j >
r(e_k)$, then $e_k \in C_{j,k}$ and we take the better of two options:
$\ora{e_k}$ or $\ola{e_k}$.  
Refer to Figure~\ref{fig:2arm}.

\begin{figure}[ht]
\centerline{
\input{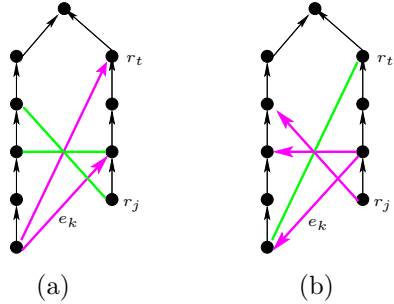}
}
  \caption{The case in which $e_k$ is oriented (a) $\protect\ora{e_k}$, from left to right and (b) $\protect\ola{e_k}$, from right to left.}
  \label{fig:2arm}
\end{figure}

In the $\ora{e_k}$ option, to ensure acyclicity, any edge in $C_{j,k}$ that
shares $e_k$'s endpoint must also be oriented from left to right.  
We define $t$ such that $e_t$ is the last edge of $C_{j,k}$ with $\ell(e_t) <
    \ell(e_k)$.
Any
acyclic orientation of $C_{j,t}$ combined with this will also be
acyclic as any introduced cycle would go through $\ell(e_k)$
which we have prevented.  The only additionally covered vertex is
$\ell(e_k)$ if it is not already covered by $\ora B$.
Formally, letting $e_t$ be the last edge of $C_{j,k}$ with $\ell(e_t) <
    \ell(e_k)$:
\[
\begin{array}{ll}
 \mbox{Option }\ora{e_k}: &
  \left\{
    \begin{array}{lll} 
      \ora{C_{j,k}} &=&
      \ora{C_{j,t}} \cup \{ \ora{e}\ : \ \ell(e) = \ell(e_k),\ e \in C_{j,k} \} \\
      c_{j,k} &=& c_{j,t} + \delta(\ell(e_k))
    \end{array}
  \right.
\end{array}
\]

In the $\ola{e_k}$ option, to ensure acyclicity, any edge in $C_{j,k}$
with right endpoint after $r(e_k)$ must also be oriented from right to
left. We define $t$ such that $r_t = r(e_k)$. Note that $t<j$ since
we are considering the case $r_j>r(e_k)$.
Any acyclic orientation of $C_{t,k}$ combined with this will
also be acyclic as any introduced cycle would have to go through
$r(e_k)$, which we have prevented.  The right endpoints of these newly
oriented edges may become covered if they were not already covered by
$\ora B$.  Formally, letting $r_t = r(e_k)$:
\[
\begin{array}{ll}
  \mbox{Option }\ola{e_k}: &
  \left\{
    \begin{array}{lll} 
      \ora{C_{j,k}} &=&
      \ora{C_{t,k}} \cup \{ \ola{e}\ : \ r(e) \ge r_t,\ e \in
      C_{j,k} \} \\
      c_{j,k} &=& c_{t,k}+\sum_{\ola{e}:r(e) \ge r_t, e \in
      C_{j,k}} \delta(r(e))
    \end{array}
  \right.
\end{array}
\]
It is not difficult to see that by storing these two options for
each value of $j$ and $k$, one can give a polynomial-time implementation
of the dynamic program.

\begin{theorem}
{\sc Tree augmentation} in the special case of 2 arms can be solved
optimally using dynamic programming in polynomial time.
\end{theorem}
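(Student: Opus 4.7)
The plan is to verify correctness of the dynamic program by induction on $k$ and then to bound its running time. For the inductive hypothesis, assume that for all $(j',k')$ with $k' < k$, the table entry $c_{j',k'}$ records the maximum number of endpoints of $C_{j',k'}$ covered by any orientation $\ora{E'}$ such that $\ora T \cup \ora B \cup \ora{E'}$ is acyclic, and that $\ora{C_{j',k'}}$ achieves this value. The base case $k=0$ is immediate since $C_{j,0} = \emptyset$.

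For the inductive step with $k \geq 1$, the case $r_j \le r(e_k)$ is trivial because $C_{j,k} = C_{j,k-1}$. Otherwise $e_k \in C_{j,k}$, and since any acyclic orientation must orient $e_k$ either left-to-right or right-to-left, it suffices to justify the two options proposed by the DP. The key forcing claim for the $\ora{e_k}$ option is that every other edge $e \in C_{j,k}$ with $\ell(e) = \ell(e_k)$ must also be oriented $\ora{e}$; for otherwise $\ora{e_k}$ together with $\ola{e}$ and the directed tree-arc path along the right arm from $r(e_k)$ toward $r(e)$ (which is closer to the root than $r(e_k)$ by the tie-breaking sort order) would close a cycle. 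By definition of $t$, the edges of $C_{j,k}$ with strictly smaller left endpoint are precisely $C_{j,t}$, so the residual sub-instance is exactly $C_{j,t}$. The analogous forcing claim for the $\ola{e_k}$ option, using the index $t$ with $r_t = r(e_k)$, reduces the residual problem to $C_{t,k}$.

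The main obstacle is the converse direction: that combining an optimal orientation $\ora{C_{j,t}}$ (respectively $\ora{C_{t,k}}$) provided by the inductive hypothesis with the forced new arcs yields an acyclic orientation of $C_{j,k}$. In the $\ora{e_k}$ case, every newly oriented arc originates at $\ell(e_k)$, so any newly introduced cycle must pass through $\ell(e_k)$ and in particular must return to $\ell(e_k)$ from the right arm. But every edge of $\ora{C_{j,t}}$ has left endpoint strictly above $\ell(e_k)$ on the left arm, and the tree and back arcs on the left arm are all directed toward the root; hence no directed path can reach $\ell(e_k)$ or any of its descendants on the left arm from the right arm, so no cycle forms. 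A symmetric argument rules out cycles for the $\ola{e_k}$ option by focusing on $r(e_k)$.

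The answer to the full problem is recovered by evaluating the DP with a sentinel value of $j$ for which $C_{j,m} = C$, and combining $\ora{C_{j,m}}$ with $\ora B$. The table has $O(n_r \cdot m)$ entries; after a one-time $O(n_r + m)$ precomputation of the values of $t$ needed by the two options, each entry is computed in constant time, yielding an overall polynomial running time.
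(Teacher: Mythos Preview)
Your proposal follows the same dynamic program as the paper and supplies a correctness argument that the paper leaves largely implicit; the forcing arguments and the acyclicity checks for combining the sub-solution with the newly oriented arcs are both sound. There is, however, a gap in your induction scheme. You induct on $k$ alone, assuming the table is correct for all $(j',k')$ with $k' < k$, but in the $\ola{e_k}$ option the recursion is to $C_{t,k}$ with $r_t = r(e_k)$ and the second index unchanged; your hypothesis does not cover this entry. The fix is straightforward---induct lexicographically on $(k,j)$, or simply note that for fixed $k$ you fill the row in increasing $j$ and that $t < j$---but as written the inductive step does not close.

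A smaller point: the claim that each entry is computed in $O(1)$ time after $O(n_r + m)$ precomputation is too optimistic for the $\ola{e_k}$ option, whose update includes a sum of $\delta(r(e))$ over the newly forced edges, and this set depends on both $j$ and $k$. This does not threaten the polynomial bound you ultimately state, but constant time per entry would require precomputing those partial sums as well.
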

It is easy to see the following generalization to the weighted case.
In the dynamic programming, instead of having binary $\delta(v)$ we have
$\delta(v)$ reflect the weight of node $v$.  
\begin{cor}
{\sc Weighted Tree augmentation} in the special case of 2 arms can be
solved optimally using dynamic programming in polynomial time.
\end{cor}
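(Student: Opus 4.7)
The plan is to lift the two-arm dynamic program from the preceding theorem verbatim, changing only the coverage bookkeeping so that each vertex contributes its weight rather than a unit. First I would redefine $\delta(v)$ to be $w(v)$ when $v$ is \emph{not} covered by $\ora B$, and $0$ otherwise. With this single substitution, the quantity $c_{j,k}$ stops being a count of covered endpoints and instead becomes the total $w$-weight of endpoints of $C_{j,k}$ that are newly covered by $\ora{C_{j,k}}\cup\ora B$ (beyond what $\ora B$ already covers on the two arms). The base case $c_{j,0}$ is then $0$, and the final objective is $c_{n_r,m}$ plus the fixed weight of vertices already covered by $\ora B$.

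Next I would verify that the two recurrence branches of the unweighted DP go through unchanged. In the $\ora{e_k}$ option, orienting $e_k$ and every $C_{j,k}$-edge sharing $\ell(e_k)$ from left to right still forces acyclicity exactly as before, and the only endpoint that can become newly covered relative to $\ora{C_{j,t}}$ is $\ell(e_k)$; hence the contribution is $\delta(\ell(e_k)) = w(\ell(e_k))$ precisely when $\ell(e_k)$ was not already $\ora B$-covered. In the $\ola{e_k}$ option, with $r_t = r(e_k)$, every right endpoint of an edge in $C_{j,k}$ with $r(e)\ge r_t$ becomes covered, and summing the redefined $\delta$ over those endpoints yields the correct incremental weight. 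Acyclicity arguments are structural (they only depend on endpoints, not weights), so they transfer without change; in particular, taking the \emph{maximum} of the two options at each $(j,k)$ still yields a valid acyclic orientation.

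A small point to handle carefully is the possibility that the same right endpoint $r(e)$ is the right endpoint of several edges in $C_{j,k}$: a vertex should be counted only once toward the weight. I would address this by defining $\delta$ multiplicatively so that within the sum $\sum_{\ola e : r(e)\ge r_t,\, e\in C_{j,k}} \delta(r(e))$ each vertex is charged at most once, e.g.\ by summing $\delta$ only over distinct right endpoints in the window $[r_t,r_j)$. Equivalently, precompute $W(t,j) = \sum_{v\in\{r_t,\dots,r_{j-1}\}\,:\,\text{endpoint of some }e\in C_{j,k}\text{ oriented }\ola{e}} w(v)$ in a preprocessing step; this introduces no asymptotic overhead. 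The analogous issue for the left option does not arise because only a single left endpoint $\ell(e_k)$ is at stake.

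Finally, the running time analysis is identical to the unweighted case: the table has $O(n_r m)$ entries, each computed in $O(1)$ amortized time (after the one-time preprocessing of the $\delta$- and $W$-tables), giving a polynomial-time algorithm. The only obstacle I anticipate is the double-counting issue just discussed; once resolved, correctness follows from an induction on $k$ in exactly the same form as for the unweighted theorem, with ``number of covered endpoints'' replaced throughout by ``total weight of covered endpoints.''
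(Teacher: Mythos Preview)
Your proposal is correct and matches the paper's approach exactly: the paper's entire proof of this corollary is the one-line remark that one replaces the binary $\delta(v)$ by the node weight $w(v)$, which is precisely your substitution. The double-counting concern you raise is orthogonal to the weighted extension---if it is an issue at all, it is already present in the unweighted recurrence of the preceding theorem---so it need not be re-argued for the corollary.
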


\subsection*{Conclusion}
In this paper we study improving robustness in next-hop routing by
modeling it as a graph theoretic problem {\sc tree augmentation}.
This work leads to a number of open problems. For example, can the
dynamic programming approach be applied to more special cases?  We
note that the special case of multiple arms is not immediately
amenable to dynamic programming. More generally, what is the
complexity when the problem has a bounded number of leaves?  Does the
problem admit a better-than $\frac{1}{2}$-approximation in the general case?

\bibliographystyle{plain} 
\bibliography{biblio}

\begin{thebibliography}{10}

\bibitem{ACP87}
S.~Arnborg, D.~Corneil, and A.~Proskurowski.
\newblock {Complexity of Finding Embeddings in a $k$-tree}.
\newblock {\em SIAM Journal on Algebraic and Discrete Methods}, 8, 1987.

\bibitem{Baker1994}
B.~Baker.
\newblock {Approximation algorithms for {NP}-complete problems on planar
  graphs}.
\newblock {\em Journal of The ACM}, 41:153--180, 1994.

\bibitem{Bodlaender93}
H.~Bodlaender.
\newblock {A Tourist Guide through Treewidth}.
\newblock {\em Acta Cybernetica}, 11:1--23, 1993.

\bibitem{Bod96}
H.~Bodlaender.
\newblock {A Linear-Time Algorithm for Finding Tree-Decompositions of Small
  Treewidth}.
\newblock {\em SIAM Journal on Computing}, 25(6):1305--1317, 1996.

\bibitem{CM92}
B.~Courcelle and M.~Mosbah.
\newblock {Monadic second-order evaluations on tree-decomposable graphs}.
\newblock In Gunther Schmidt and Rudolf Berghammer, editors, {\em
  {Graph-Theoretic Concepts in Computer Science}}, volume 570 of {\em {Lecture
  Notes in Computer Science}}, pages 13--24. Springer Berlin / Heidelberg,
  1992.

\bibitem{GJ79}
M.~Garey and D.~Johnson.
\newblock {\em {Computers and intractability: A guide to the theory of
  {NP}-completeness}}.
\newblock W. H. Freeman and Co., San Francisco, Calif., 1979.

\bibitem{GSBCSHT12}
M.~Goyal, M.~Soperi, E.~Baccelli, G.~Choudhury, A.~Shaikh, H.~Hosseini, and
  K.~Trivedi.
\newblock {Improving Convergence Speed and Scalability in OSPF: A Survey}.
\newblock {\em {IEEE} Communications Surveys and Tutorials}, 14(2), 2012.

\bibitem{rfc:1058}
C.~Hedrick.
\newblock {Routing {I}nformation {P}rotocol}.
\newblock {RFC} 1058, June 1988.

\bibitem{rfc:2178}
J.~Moy.
\newblock {{OSPF} {V}ersion 2}.
\newblock {RFC} 2178, April 1998.

\bibitem{rfc:1142}
D.~Oran.
\newblock {{OSI} {IS-IS} {I}ntra-domain {R}outing {P}rotocol}.
\newblock {RFC} 1142, February 1990.

\bibitem{RI07}
A.~Raj and O.~Ibe.
\newblock {A survey of {IP} and multiprotocol label switching fast reroute
  schemes}.
\newblock {\em Computer Networks}, 51(8):1882--1907, 2007.

\bibitem{VLK12}
H.~Q. Vo, O.~Lysne, and A.~Kvalbein.
\newblock {Permutation Routing for Increased Robustness in {IP} Networks}.
\newblock In {\em {Proceedings of the 11th international {IFIP} {TC} 6
  conference on Networking - {V}olume {I}}}, pages 217--231, 2012.

\end{thebibliography}

\end{document}